\newcommand{\opd}{\mathrm{d}}
\newcommand{\bP}[1]{\mathbf{P}\big(#1\big)}
\title{A competing risks interpretation of Hawkes processes}
  \author[1]{Maximilian Aigner}
  \author[2]{Valérie Chavez-Demoulin}
  \affil[1,2]{Department of Operations, University of Lausanne, Lausanne, Vaud, 1015, Switzerland}
\begin{document}
\maketitle


\begin{abstract}
We give a construction of the Hawkes process as a piecewise competing risks model. We argue that the most natural interpretation of the self-excitation kernel is the hazard function of a defective random variable. This establishes a link between desired qualitative features of the process and a parametric form for the kernel, which we illustrate using examples from the literature. Two families of cure rate models taken from the survival analysis literature are proposed as new models for the self-excitation kernel. Finally, we show that the competing risks viewpoint leads to a general simulation algorithm which avoids inverting the compensator of the point process or performing an accept-reject step, and is therefore fast and quite general.
\keywords{defective distribution, Hawkes process, hazard rate, simulation}
\end{abstract}



\section{Introduction}

The Hawkes process \citep{hawkes_point_1971} is a model for self-exciting events occurring in continuous time. It is a point process $N(t)$ on the real line, parametrised by the conditional intensity function \citep{hawkes_point_1971, daley_introduction_2003}
\begin{equation}
\lambda(t) = \eta + \int_{-\infty}^t h(t - s)\, N(\opd s).
\label{eq:cif}
\end{equation}
The \emph{baseline intensity} parameter $\eta > 0$ represents the rate of background events, while the \emph{excitation kernel} $h(t)$, a positive function defined on the positive real line, models the influence of previous events on the conditional intensity. We assume 
\begin{equation}
\gamma = \int_0^\infty h(t) \opd t < 1,
\end{equation} which ensures that the process defined by \eqref{eq:cif} exists \citep[§6.3c]{daley_introduction_2003}.

The Hawkes process can be viewed \citep{hawkes_cluster_1974} as a generalised branching process, where exogenous events arrive at the times of a Poisson process with constant rate $\eta$. Each exogenous event $t_i$ then triggers offspring events according to an inhomogeneous Poisson process with rate $h(\cdot - t_i)$. Each offspring event in turn generates further events, and so on. One realisation of the Hawkes process is the union of all the events of all generations. 

Because the excitation kernel locally increases the conditional rate of new points according to \eqref{eq:cif}, the Hawkes process naturally models self-exciting processes, such as epidemics \citep{rizoiu_sir-hawkes_2018}, earthquakes \citep{ogata_statistical_1988, marsan_extending_2008}, or crime \citep{lewis_nonparametric_2011}; see \cite{hawkes_hawkes_2018} and \cite{reinhart_review_2018} for reviews of applications. 

Self-exciting behaviour is best observed at high temporal precision as it can be masked by aggregation at too great a scale \citep{cheysson_strong_2020}. Therefore, the use of continuous-time models such as the Hawkes process has increased in line with the availability of high-precision temporal data from modern technologies such as always-connected devices. Recent applications of the Hawkes process thus include analysis of user activity on social networks \citep{salehi_learning_2019}, fraud detection \citep{dutta_hawkeseye_2020}, or application to high-frequency trading \citep{bacry_estimation_2016, bacry_first-_2016, bacry_hawkes_2015}. The same argument can be made regarding spatial precision \citep{adepeju_investigating_2017, mohler_self-exciting_2011}.

We note that the Hawkes process has been theoretically extended in multiple ways, such as by considering multiple point processes and their cross-excitation \citep{hawkes_point_1971},
by allowing the baseline rate to vary in time \citep{wheatley_hawkes_2016}, 
or by linking the conditional intensity to the linear predictor \eqref{eq:cif} through a nonlinear function \citep{bremaud_stability_1996, costa_renewal_2018}. For the sake of clarity and to underline our argument, we restrict ourselves in this work to the simple, univariate case specified above. The conclusions, however, apply to all of the extensions mentioned except the nonlinear Hawkes process of \cite{bremaud_stability_1996}.

\subsection{Parameterisations of the Hawkes process}
Aside from the scalar parameter $\eta$, the estimation of a Hawkes process model depends on either providing a parametric form for $h(t)$ or on estimating it non-parametrically. The only requirements imposed on $h(t)$ are positivity and the stability requirement $\gamma < 1$. This general setting means that the possibilities for both parametric or non-parametric specifications of $h(t)$ are vast.

A great variety of parametric forms have been proposed for the function $h(t)$, motivated either by domain knowledge or computational efficiency. An important example is the famed Omori law of aftershocks \citep{ouillon_magnitude-dependent_2005} in seismology, which has led to the ETAS kernel \citep{ogata_statistical_1988, ogata_space-time_1998}
$$
h(t) = \frac{K}{(t+c)^{p}}, \quad(c \geq 0, p \geq 0)
$$ according to which the rate of offspring events decreases as a power law of the time elapsed. 
Another common choice is the exponential-type kernel $h(t) = \alpha e^{-\beta t}$, whose advantages are computational tractability \citep{hawkes_point_1971, hawkes_spectra_1971, ogata_asymptotic_1978, ogata_linear_1982} 
and the Markovian nature of the corresponding Hawkes process \citep{liniger_multivariate_2009}.

The function $h(t)$ is usually interpreted as representing a time-changing ``influence'' \citep{xu_dirichlet_2017} or ``excitation effect'' \citep{li_fm-hawkes_2017}. The shape of $h(t)$ is chosen to match qualitative features such as the speed of decay to 0 or the refractory period $a \geq 0$ such that $h(t) = 0$ for $t \leq a$, representing a delayed effect \citep{eichler_graphical_2017}. Still, in the absence of further constraints, the number of possible parametric specifications for $h(t)$ remains too large.

Non- or semi-parametric estimation of $h(t)$ has been explored extensively, using spline bases \citep{zhou_learning_2013-1, reynaud-bouret_adaptive_2010, carstensen_multivariate_2010}, histogram and kernel density estimation \citep{lewis_nonparametric_2011, kirchner_estimation_2017, eichler_graphical_2017}, and Gaussian process regression \citep{dowling_non-parametric_2020,zhang_variational_2020}. The many different approaches reflect the freedom of choice of the parameter $h(t)$, and yield satisfactory results given sufficient data. However, these various methods can cause conceptual and computational difficulties, and it is possible that more parsimonious models could do equally well.

In this paper, we aim to address the question of how to qualitatively assess a choice of $h(t)$ for a given set of data, and what interpretation to give to this function. We will argue that based on conceptual and computational grounds, setting $h(t)$ to be a \emph{hazard} function is the most natural choice. 
Section \ref{sec:arguments} gives several arguments to this effect, based on theory as well as applications. Section \ref{sec:cure_rate} interprets existing models and proposes two ways to generate appropriate models for $h(t)$ as a hazard function. Furthermore, in Section \ref{sec:simulation} we use the hazard function interpretation to develop a fast simulation algorithm. Avenues for further research are described in the concluding Section \ref{sec:conclusion}.


\section{Interpretation of the excitation function}
\label{sec:arguments}

We recall the general definition of a conditional intensity for a point process $N$: it is a stochastic process $\lambda(t)$ such that
\begin{equation}
\mathbb{E}\left[ N(a, b) \,\bigl\vert \, \mathcal{H}_a \right] = \mathbb{E}\left[ \int_a^b \lambda(t) \opd t \, \bigl\vert \, \mathcal{H}_a\right],
\label{eq:def_cif}
\end{equation}
where $\mathcal{H}_a = \sigma(\left\{N(0, s) \mid 0 < s < a\right\})$, a sigma algebra, is the history of the process at time $a$, including the previous event times.

In the specific case of point processes on $\mathbb{R}_+$, $\lambda(t)$ can be constructed piecewise as the hazard rate between two events of the process \citep{daley_introduction_2003}: consider the situation where $i-1$ times of the process have been observed and denote them by $t_1, \ldots, t_{i-1}$; for $t > t_{i-1}$, define $f_i(t\, | \mathcal{H}_{t_{i-1}})$ as the conditional density of the $i$-th event time given the history. Then $\lambda(t)$ is defined in the $i$-th interval $t_{i-1} < t < t_i$ as the hazard rate,
$$
\lambda(t) = h_i(t |\mathcal{H}_{t_{i-1}}) = \frac{f_i(t | \mathcal{H}_{t_{i-1}})}{1-F_i(t | \mathcal{H}_{t_{i-1}})}.
$$
While this definition does not readily extend to processes on more general spaces, it is more intuitive in our setting.

The following sub-sections give arguments for the interpretation of $h$ as a hazard function.

\subsection{Competing risk models}
\label{sec:arg_competingrisks}

First, the Hawkes process can be seen as an application of the competing risk framework used in survival analysis and reliability studies. This framework addresses the situation when an event time (death, failure, etc) can arise through one of $J$ competing causes, and the cause is observed only at the event time. In this situation, a classical viewpoint (the \emph{latent class} model) is to define $J$ \emph{latent} times associated with each event, and to assume that the observed failure time corresponds to the minimum of the latent times (first failure). Then the overall hazard $\lambda(t)$ of any individual can be decomposed additively into $J$ cause-specific hazards, measuring the instantaneous probability of death from each cause. For more details see e.g.~\cite{kalbfleisch_statistical_2002}. 

In the Hawkes process context, an event at time $t$ is caused by one of the $N(t)$ preceding events or the background, and may be the cause (or parent) of later events. 
While the latent class model has drawbacks, particularly relating to estimation of the cause-specific hazards \citep{crowder_identifiability_1994}, in this case the cause-specific hazards $h(t - t_i)$ are all time-shifted versions of the same hazard function $h$, and we are not interested in the cause-specific hazards themselves.
Moreover, the cause of each event is not or only partially observed in a Hawkes process context, which leads to the missing-data EM approach detailed in the next section, and the number of causes is no longer fixed but varies with time. We note in particular that the additive decomposition of risks applies only to the hazard function, not to the density function \cite[ch.8]{kalbfleisch_statistical_2002}. This motivates our interpretation over, say, the interpretation that $h(t) = \gamma f(t)$ with $f$ a density function and $\gamma < 1$, as proposed by \cite{xu_dirichlet_2017} and others.

\subsection{Complete-data likelihood}

The Hawkes process \emph{complete-data} likelihood, which is obtained by conditioning on the branching process structure of the events as well as their times, can be expressed as a survival model likelihood. Indeed, let $z_{ji} = 1$ when $i$ is an offspring event of $j$ and $0$ otherwise, with the convention that $z_{0i} = 1$ refers to background events. Then $\mathbf{z} = (z_{ji})_{j,i=1}^n$ contains all the branching structure of the observations $t_1, \ldots, t_n$. For data observed in the interval $[0, T]$, say, the complete data log-likelihood of some parameter vector $\theta$ is then \citep{lewis_nonparametric_2011}
$$
\mathrm{cl}(\theta \big\lvert \mathbf{z}, \mathbf{t}) = \sum_{i=1}^n z_{0i} \log\eta_\theta - \eta_\theta T + \sum_{i=2}^n \left[\sum_{j=1}^{i-1} z_{ji} \log h_\theta(t_i - t_j) - \int_{t_i}^T h_\theta(t-t_j) \opd t\right]
$$

The first two terms are the likelihood of a homogeneous Poisson process with intensity $\eta_\theta$, applied to the background events. The second part can be rewritten in terms of the density $f$ and survival function $S$ associated with $h$ as

\begin{align}
\log \left[ \prod_{i=2}^n \prod_{j=1}^{i-1} h_\theta(t_i - t_j)^{z_{ji}} e^{-\int_{t_j}^{t_i} h_\theta(t-t_j) \opd t}\right] &= \log \left[ \prod_{i=2}^n \prod_{j=1}^{i-1} f_\theta(t_i - t_j)^{z_{ji}} S_\theta(t_i - t_j)^{1-z_{ji}} \right]
\label{eq:likelihood_prod}
\end{align}

The expression on the right-hand side of \eqref{eq:likelihood_prod} is the likelihood of survival times $t_i - t_j$, with hazard function $h(t)$, which are observed when $z_{ji} = 1$, and censored when $z_{ji} = 0$. In other words, only the causal time is observed and all the others are censored; this property is shared with the competing risk model \citep{kleinbaum_competing_2012}. 
This likelihood no longer holds if $h$ is now the density of offspring times rather than the hazard function.

The complete-data log-likelihood is essential to an EM-type estimation for Hawkes processes \citep{mohler_self-exciting_2011}; optimising the expected complete-data log-likelihood makes up the M-step of such an algorithm. The E-step is the subject of the next section.

\subsection{Stochastic declustering}

The \emph{stochastic declustering} technique \citep{marsan_extending_2008, veen_estimation_2008, mohler_self-exciting_2011} is naturally derived from a hazard function interpretation of $h(t)$. This technique estimates the branching process structure of a set of events and has been used for classifying earthquakes as initial or aftershocks \citep{marsan_extending_2008}.
It provides the ``missing data'' required in the previous step, namely the branching process structure represented by the $z_{ji}$ causal indicators.
Given the data $t_1, \ldots, t_n$ we may derive by Bayes' rule
\begin{align*}
P(z_{ji} = 1 \big\lvert t_1, \ldots, t_n) &= \frac{f_i(t_i | z_{ji}=1)}{f_i(t | z_{0i} = 1) + \sum_{k=1}^n f_i(t | z_{ki} = 1)} \\
&= \frac{h_j(t_i) S(t_i)}{\sum_{k=0}^n h_k(t_i) S(t_i)}
\end{align*}
where $h_j$ is the hazard of $t_i$ when caused by $t_j$ and $S(t_i)$ is the overall survival function of $t_i$. This can again be written as
\begin{equation}
P(z_{ji} = 1 \big\lvert t_1, \ldots, t_n) = \frac{h(t_i - t_j)}{\eta + \sum_{k=1}^n h(t_i - t_k)},
\label{eq:e-step}
\end{equation}
which is the standard formula found in e.g.~ \cite{mohler_self-exciting_2011}. Note the background term $\eta$ in the denominator of \eqref{eq:e-step}, which cannot be interpreted as a density.

\subsection{Inverse compensator of the Hawkes process}

Finally, consider the task of simulating a sample $(T_i)_{i=1}^n$ from a point process with stochastic intensity $\lambda(t)$, or more generally with a compensator $\Lambda(t)$ which when absolutely continuous is represented as $\Lambda(t) = \int_0^t \lambda(s) \opd s$. The random time change theorem (see e.g. \cite{bremaud_point_1981,brown_time-rescaling_2002}) implies that the rescaled event times of $N$, $\Lambda(T_i)$, are stochastically equivalent to the event times of a Poisson process with unit intensity. Working backwards, if the compensator is not too hard to invert, a sample from $N$ in the interval $[0, T]$ can be simulated by the equation
\begin{equation}
T_n \overset{d}{=} \Lambda^{-1}\left(\sum_{i=1}^n E_i \right)
\label{eq:invert_comp_sim}
\end{equation}
where $E_i$ are i.i.d.~ standard exponential variates.
The Hawkes process compensator has the form
$$
\Lambda(t) = \eta t + \sum_{i=1}^{N(t)} \int_{t_i}^t h(s-t_i) \opd s
$$
and the sum makes this difficult to invert as a function of $t$. If we choose for $h(t)$ a hazard function, however, and define the cumulative hazard $H(t) = \int_0^t h(s) \opd s$, then the compensator of $N$ may be written as 
$$
\Lambda(t) = \eta t + \sum_{i=1}^{N(t)} H(t - t_i)
$$
and then the overall survival function is
\begin{equation}
S(t) = e^{-\Lambda(t)} = e^{-\eta t} \prod_{i=1}^{N(t)} e^{-H(t-t_i)} =: S_0(t) \prod_{i=1}^{N(t)} S_i(t).
\label{eq:overall_survival}
\end{equation}
Note that equation \eqref{eq:overall_survival} is also the survival function of the minimum of $N(t) + 1$ times, associated with the survival functions $S_1(t), \ldots, S_{N(t)}(t), S_0(t)$ respectively. Thus, if each compensator associated in the sum is easily invertible, then the time of the next event can be simulated by inverting the individual compensators and taking the minimum. %
This can still be difficult in practice, however; for the models considered in this paper and those drawn from survival analysis, simulation is usually straightforward. The details of a simulation algorithm using the method above are given in Section \ref{sec:simulation}.

We believe these arguments are sufficient to establish the interpretation of $h$ as a hazard function. In the next two sections, we describe consequences of this interpretation for modelling and simulation of Hawkes processes.


\section{Cure rate models for the Hawkes process}
\label{sec:cure_rate}

As mentioned previously, a large number of choices for $h$ exist in the literature. We review these and compare them from the perspective of the associated distributions. 

Note that if $h$ is a hazard function, then the stability criterion $\int_0^{\infty} h(t) \opd t < 1$ bounds the cumulative hazard, that is, the total expected number of events that are offspring of any event. Because $\lim_{t\to\infty} F(t) = \lim_{t\to\infty} 1 - e^{\int_0^{\infty} h(t) \opd t} < 1$, the distribution associated with $h$ must be \emph{defective} in the sense that it takes value $+\infty$ with nonzero probability. 
In our setting, we have
$$
\gamma = \int_0^\infty h(t) \opd t = -\log \left[1 - F(+\infty)\right] < 1
$$
and therefore the probability with which the variable is infinite, and the corresponding event does not happen, must exceed $1 - 1/e \simeq 0.63$.
From a survival analysis perspective, a defective event-time distribution implies that a certain part of the population does not experience the event of interest, and models for this phenomenon are called \emph{cure rate} or \emph{cure fraction} models. 

\subsection{Existing models}

\begin{table}[ht]
    \caption{Commonly used excitation functions and their corresponding distributions. \label{tab:hazards_literature}}
    \centering
    \bigskip
    
    \makebox[\textwidth][c]{
    \begin{tabular}{lllr}
    \toprule[2pt]
    Excitation & $h(t)$ & & Distribution\\
    \cmidrule[1pt](rl){1-4}
    Exponential & $\alpha e^{-\beta t}$ & $a\geq 0, b > 0$ & $\text{Gompertz}(\alpha, -\beta)$ \\
        \addlinespace[0.5em]
    Omori Law/ETAS & $\dfrac{K}{(t + c)^{p}}$ & $K\geq 0, c > 0, p > 0$ & Weibull if $p \leq 1$, ``Omori'' if $p > 1$\\
    \addlinespace[0.5em]

    Nonparametric & $\sum_{k=1}^{N_h} \alpha_k \mathbf{1}_{\left[(k-1)\Delta, k\Delta\right)}(t)$ & $\alpha_k \geq 0, \Delta > 0, N_h \in \mathbb{N}$ & Piecewise Exponential \\
        \addlinespace[0.5em]
    \bottomrule[2pt] 
    \addlinespace[0.5em]
    \end{tabular}
    }
\end{table}


Table \ref{tab:hazards_literature} lists some common choices for $h$. The most common choice by far is the exponential hazard function, which we identify as the hazard function of a Gompertz distribution with negative shape parameter. This model has been used in previous work on disease incidence (where it models a fraction of immune subjects) and growth curve modelling, where it models the adoption of new technology. Qualitatively, a Gompertz law models a high initial excitation effect (or mortality) followed by rapid decay. Those times which pass the initial high mortality period are likely to survive indefinitely (i.e.~be equal to $+\infty$). The Gompertz model is a classical model for population with cure fraction \citep{gieser_modelling_1998}.

The Omori law has been mentioned previously as a model for the frequency of aftershocks after an initial earthquake. It is also called the Epidemic-Type After Shock (ETAS) kernel. The hazard function has power-law decay as $t \to \infty$, which has been interpreted as a ``long-memory'' effect \citep{bacry_estimation_2016}. The distribution is defective when $p > 1$, whereas for $0 < p \leq 1$ the hazard corresponds to that of a 3-parameter Weibull distribution. The corresponding waiting-time distribution is similar in some respects to a Gumbel Type II distribution truncated to the positive half-line, but with a negative shape parameter. Since we could not identify another name for this distribution, we will simply refer to it as the Omori distribution.

We note in this regard that the Omori distribution with $p = 1$ is not admissible as a Hawkes excitation, since it does not lead to a stable process: with $\gamma \geq 1$, the number of events within any finite interval is infinite with some nonzero probability; in the seismological context, this would represent a main quake triggering a series of aftershocks which either never dies out, or worse, increases in frequency without bound. Typically, though, the estimates from earthquake catalogues do not exclude the possibility that $p = 1$. One possible answer is that the catalogues themselves are limited in time, creating a censoring effect on the Omori law \citep{hainzl_statistical_2016} that renders the process stable (see Section \ref{sec:tech_def} below). The Omori law is usually motivated as a power-law decay, but it has received criticism practically since its inception \citep{mignan_revisiting_2016-1}. In fact, recent work \citep{hainzl_testing_2017, mignan_modeling_2015} has indicated that the Omori law fails to hold in many cases, and that the tail behaviour of $h(t)$ must be lighter than $t^{-p}$. A promising alternative is the stretched exponential function,
$$
h(t) = t^{\beta - 1} e^{-\lambda t^\beta}
$$
which is justified based on physical considerations of stress-release mechanics \citep{mignan_modeling_2015}. Because $\int_0^{+\infty} h(t) \opd t = (\lambda \beta)^{-1}$, the induced distribution is defective for appropriate values of $\lambda$ and $\beta$. However, we could not identify it as the hazard function of a distribution, though it strongly resembles the Weibull Extension of \cite{xie_modified_2002}.

Finally, though the above parametric distributions are quite well known, the correspondence established in Section \ref{sec:arguments} also extends to a ``nonparametric'' form of $h$ which has been used in the absence of a specific parametric form for the excitation functions \citep{eichler_graphical_2017,kirchner_estimation_2017}. This consists in setting 
$$
h(t) = \sum_{k=1}^{N_h} \alpha_k \mathbf{1}_{\left[(k-1)\Delta, k\Delta\right)}(t)
$$
with fixed $N_h, \Delta$ and estimating the parameters $\alpha_k$. As a specification for the hazard function, this type of model has been used extensively in survival analysis \citep{laird_covariance_1981, weiss_piecewise-constant_2017, balakrishnan_piecewise_2016, bouaziz_l0_2017}. 
The corresponding distribution is a composite (piecewise) exponential distribution with rates given by the $\alpha_k$. Simulation from this law is straightforward, for example using the well-known alias method for sampling from a discrete distribution.

\subsection{Technically defective distributions}
\label{sec:tech_def}

As mentioned previously, we are interested in new parametric models for the hazard function; in light of the previous discussion, these are hazards of defective random variables. In the following, we will distinguish between \emph{natural} defective distributions, which arise from known distributions through the addition of new parameters, and those distributions which are rendered defective through some technical device. Indeed, any distribution traditionally used in survival analysis, such as the Weibull distribution, can be rendered defective by truncation, admixture with a defective distribution, or both.
\begin{enumerate}
    \item Truncation: When the hazard function is set to zero beyond a certain time $\kappa$, yielding the truncated hazard $h(t) \mathbf{1}(t \leq \kappa)$, then the integrated hazard function is flat for $t \geq \kappa$. The distribution therefore contains a defective fraction $1 - F(\kappa)$, and the parameter $\kappa$ can be set to maintain the stability criterion $\int_0^\infty h(t) \opd t < 1$. This mechanism is often plausible in practical settings since the memory of the process is usually not infinite, but rather the process at time $t$ depends only on the events in $[t-\kappa, t)$.
    \item Mixture of immunes: Adopting a mixture model of the form $F^*(t) = \pi F(t) + (1 - \pi) \times 0$ yields a defective fraction $1 - \pi$. This model has been used in cure rate modelling, particularly to relate covariates to the cure probability. In this case, each event either has no offspring (with probability $1-\pi$) or generates offspring in the usual way (with probability $\pi$) according to the hazard function $h(t)$.
\end{enumerate}
To deal with the stability issues of the Omori law, several authors have advocated truncation, though it places an upper limit on the memory of the process. However, in other contexts the ETAS kernel has been chosen because of its long-memory character \citep{bacry_estimation_2016}. In the next section, we describe how models such as the ETAS kernel can be rendered defective through the addition of further parameters.

\subsection{Families of defective distributions}
\label{sec:families}

Natural defective distributions are quite uncommon given that they are not distributions in the classical sense. Two known distributions which are rendered defective when parameters are chosen outside their usual ranges are the Gompertz and Inverse Gaussian distributions \citep{rocha_new_2017-1}. Further distributions can be generated by augmenting existing defective distribution with additional parameters in order to accommodate different shapes of the hazard function; a large class of such generalised (or exponentiated) distributions is generated by the Kumaraswamy transformation \citep{rocha_new_2017-1,cordeiro_kumaraswamy_2010}. Alternatively, we can use the Marshall-Olkin transformation \citep{marshall_new_1997, cordeiro_marshall-olkin_2014} to turn an improper distribution (say, when the parameter value is taken outside its permitted range) into a defective distribution \citep{rocha_new_2017}. We treat each of these in turn.

The Marshall-Olkin family generated by a baseline distribution with hazard $h(t)$ is a family of distributions indexed by $r \geq 0$ and defined by the hazard rate or equivalently by the survival function:
\begin{align}
h_r(t) &= \frac{h(t)}{1 - (1 - r)S(t)} \quad \text{and} \label{eq:mo_def}\\
S_r(t) &= \frac{rS(t)}{1 - (1 - r)S(t)} \nonumber
\end{align}
This results in a defective fraction of
$$
S_r(\infty) = \frac{rS(\infty)}{1-(1-r)S(\infty)},
$$
i.e. the Marshall-Olkin conserves the defective character of $S$ \citep{rocha_new_2017}.
The parameter $r$ can be interpreted as shifting the hazard curve up or down depending on whether $r > 1$ or $r < 1$.

Moreover, if we form an improper distribution such that the survival function $S(t) \geq 1$ and $S(t) \to \infty$ as $t \to \infty$, then the Marshall-Olkin transform of $S$ is a defective distribution \citep{rocha_new_2017}. This yields defective versions of well-known laws such as the Weibull, Extended Weibull, Chen \citep{chen_new_2000}, etc.

Another family of defective distributions can be generated by the Kumaraswamy transformation. For a distribution with survival distribution $S(t)$, the Kumaraswamy-$S$ distribution with parameters $r, u > 0$ is given by \citep{rocha_new_2017-1}
\begin{align}
S_{r,u}(t) &= 1 - \left[1 - S(t)^r\right]^u \text{ and} \label{eq:kum}\\
h_{r,u}(t) &= \frac{ruf(t)F(t)^{r-1}\left[1-F(t)^r\right]^{u-1}}{1 - \left[1 - F(t)^r \right]^u}
\end{align}
The form \eqref{eq:kum} subsumes many so-called \emph{exponentiated} distributions related to $S$. It can be interpreted as modelling the rate of failure of $u$ serial components and $r$ parallel components; see \cite{cordeiro_kumaraswamy_2010}. The defective fraction of the distribution $S_{r,u}$ is given by
$
S_{r,u}(\infty) = 1 - \left[1 - S(\infty)^r\right]^u,
$
and again $S_{r,u}$ is defective only if $S$ is as well \citep{rocha_new_2017-1}.

\subsection{Example}

As an example, we extend the well-known Gompertz distribution (which corresponds to the exponential-type hazard rate) by an additional scale and shape parameter via the Kumaraswamy transformation. As can be seen from Figure \ref{fig:ku_gompertz}, the resulting hazard function can accommodate many different types of excitation, including delayed effects, slow decay and very fast drop-off.
\begin{figure}[ht]
    \centering
    \includegraphics[width=0.8\textwidth]{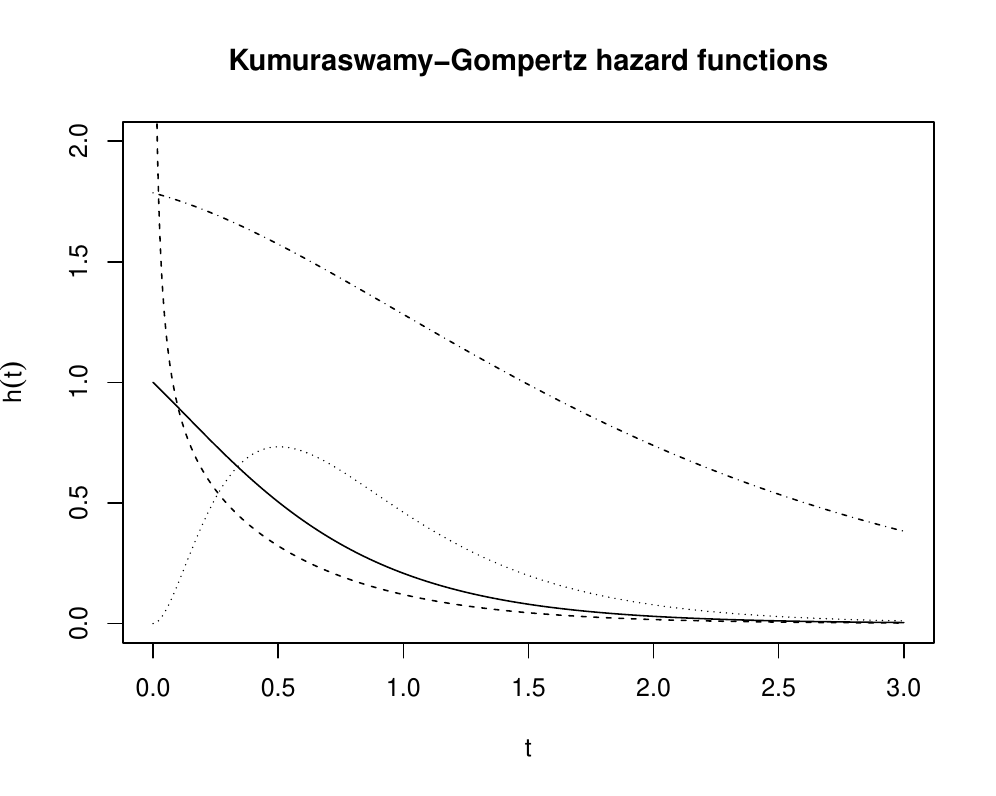}
    \caption{Hazard functions for various Kumaraswamy-Gompertz distributions. The parameters in the form $(\alpha, \beta, r, u)$ are: $(1, 2, 1, 1)$ (solid); $(0.875, 2, 0.5, 0.5)$ (dashed); $(1.09, 2, 3, 7)$ (dotted); $(0.595, 0.75, 1, 3)$ (dot-dashed). These represent different shapes while keeping the defective fraction fixed at $0.5$.}
    \label{fig:ku_gompertz}
\end{figure}

We fit a Hawkes process with this excitation function to a small USGS earthquake catalogue ($n = 2139$) consisting of events in the North American region with intensity greater than $6$, in the period 2000-2020 \citep{usgs_catalog}. The obtained parameters(s.e.) are $\alpha = 0.1098(2\times 10^{-7}), \beta = 0.11(2 \times 10^{-7}), r = 2.407(1.2\times 10^{-7}), u = 2.4075(6.1 \times 10^{-7}), \eta = 0.00714(5.1 \times 10^{-3})$, which indicates that the additional shape parameters $r$ and $u$ are needed to capture the excitation structure. The estimated hazard (or rate of aftershocks) is shown in Figure \ref{fig:fitted_hazard} below. It exhibits a small refractory period, peaks at around 10 days and decreasing rapidly thereafter. We also compute a nonparametric estimate of $h(t)$ by binning the events by $\Delta = 2$ days and counting numbers in each bin, after which an autoregressive model of order 30 is estimated on the binned data. For more details on the nonparametric procedure see \cite{kirchner_estimation_2017}. We defer a full application of the models considered here for various seismological datasets to a later paper.
\begin{figure}[ht]
    \centering
    \includegraphics[width=0.8\textwidth]{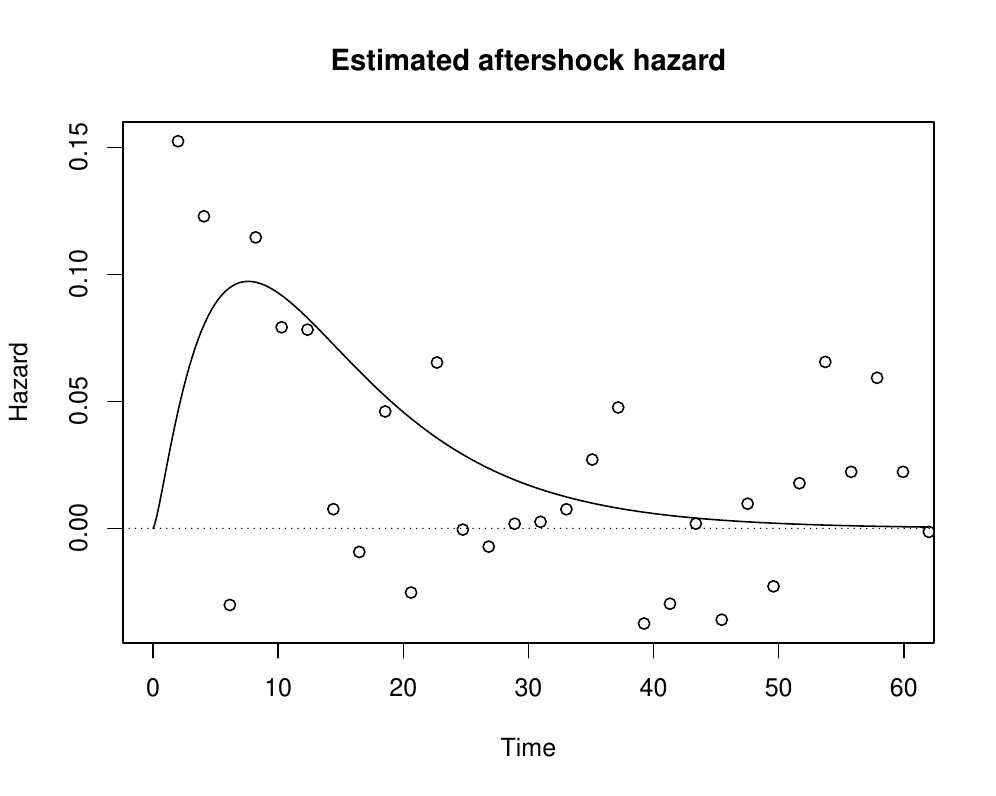}
    \caption{Estimated hazard function on the USGS earthquake catalogue, as a function of time delay in hours. Overlaid points represent a nonparametric estimate of $h$.}
    \label{fig:fitted_hazard}
\end{figure}

While the families discussed in \ref{sec:families} cover many interesting cases, it is clear that still more general families of defective distributions can be obtained by compounding both families, and there may be other such families still unknown. A comprehensive overview is out of the scope of this paper; instead we advocate simply to extend the existing distributions by one or two additional parameters and thereby gain the flexibility necessary for practical applications. 

\section{A construction of the Hawkes process}
\label{sec:simulation}

In this section, we detail an alternative data generating process (DGP) for the Hawkes process, which is amenable to simulation and easy to implement. This section is self-contained and starts from first principles; the result is summarised in Algorithm \ref{alg:sim}. 

Suppose $F$ is the CDF of a positive random variable and $\eta > 0$ is some positive number. Let us construct a point process, say $N$, in the following way. 
\begin{enumerate}
\item Initially, that is, at time $0$, we start a timer with some random, exponentially-distributed duration, say $V^{(0)}_0 \sim \mathrm{Exp}(\eta)$. When the timer runs out at $T_1 := V^{(0)}_0$, we record an event.

\item Immediately after $T_1$, we re-start the random exponential timer with a new time $V^{(1)}_0$, and also start a new timer labelled $V^{(1)}_1$ whose time is random and distributed according to $F$. We wait until either of the two timers elapses, that is we record both the wait time $W_1 := \min(V^{(1)}_0, V^{(1)}_1)$, and the event which triggered $W_1$; if the minimum is achieved by the exponential timer $V^{(1)}_0$, we say that the second event at $T_2 := T_1 + W_1$ is a background event; otherwise, we say that $T_2$ was triggered by (or caused by) $T_1$.

\item[3--$\infty$] We iteratively apply step 2 above, so that at time $T_i$, we wind back the exponential timer $V^{(i)}_0$, and start $i$ further timers, each associated with a previous event. The timer $V^{(i)}_j$ associated with the event $0 \leq T_j < T_i$ is defined by
\begin{equation}
V^{(i)}_j \overset{d}{=} V - \left(T_i - T_j\right) \mid V \geq \left(T_i - T_j\right)
\label{eq:sampling_times}
\end{equation} 
where $V$ is distributed according to $F$. This ensures that the timer delay will be greater than the gap between $T_j$ and $T_i$, or in other words, that $V_j^{(i)}$ does not fall within the already-observed period ($\mathcal{H}_t$). 

The minimum $W_i$ of all the $V^{(i)}_j$ is the time until the next event, so we set $T_{i+1} := T_i + W_i$, and the parent of $T_{i+1}$ is set as the corresponding index.
   
\end{enumerate}

In Theorem \ref{thm:hazards} we show that the times constructed in this way constitute a realisation of a Hawkes process with a background rate equal to the exponential rate $\eta$, and excitation given by the hazard function of $F$, $h(t) = f(t)/\left[1 - F(t)\right].$ 


The proof uses the following elementary lemma.
\begin{lemma} 
Let $T_i$, $i = 1\ldots d$, be independent positive random variables, interpreted as waiting times, with densities $f_i(t)$. Let $h_i(t) = f_i(t)/S_i(t)$ be the hazard rate of the $i$-th time. Then the minimum waiting time $T := \min\left(T_1, \ldots, T_d\right)$ has the hazard rate $h(t) = h_1(t) + \cdots + h_d(t)$. 
\label{lem:competing}
\end{lemma}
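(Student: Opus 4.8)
The plan is to work directly with survival functions, since the minimum of independent random variables has a survival function that factorises, and the hazard is recovered by logarithmic differentiation. First I would write $S(t) = P(T > t) = P(\min(T_1,\ldots,T_d) > t)$ and observe that the event $\{\min_i T_i > t\}$ is exactly the intersection $\bigcap_{i=1}^d \{T_i > t\}$. By independence of the $T_i$, this gives $S(t) = \prod_{i=1}^d S_i(t)$, where $S_i(t) = P(T_i > t)$ is the survival function of the $i$-th waiting time.

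Next I would pass to the hazard rate. Using the stated relation $\frac{\partial}{\partial t}\log S(t) = -h(t)$, I would take logarithms of the product to get $\log S(t) = \sum_{i=1}^d \log S_i(t)$, differentiate term by term, and read off
\[
h(t) = -\frac{\partial}{\partial t}\log S(t) = -\sum_{i=1}^d \frac{\partial}{\partial t}\log S_i(t) = \sum_{i=1}^d h_i(t),
\]
which is precisely the claimed identity $h(t) = h_1(t) + \cdots + h_d(t)$.

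For completeness I would also note that the density of $T$ is $f(t) = -S'(t) = h(t) S(t)$, so that $T$ is a genuine (possibly defective) positive random variable with the stated hazard, and that the interchange of differentiation and the finite sum is trivially justified. I expect the only real point requiring care is a regularity remark: the identity $h(t) = \sum_i h_i(t)$ holds wherever the $S_i$ are differentiable and strictly positive, which — given that each $T_i$ has a density $f_i$ — covers all $t$ in the (essential) support, so no further hypotheses are needed. There is no substantive obstacle here; the lemma is essentially the classical fact that hazards of independent competing risks add, and the proof is a two-line computation once the survival function is factorised.
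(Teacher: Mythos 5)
Your proposal is correct and follows essentially the same route as the paper: factorise the survival function of the minimum by independence, then use $\frac{\partial}{\partial t}\log S(t) = -h(t)$ to conclude that hazards add. The only cosmetic difference is that you differentiate the log-product while the paper writes each $S_i$ as an exponential of the integrated hazard before summing, which is the same computation.
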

\begin{proof}
Recall that for a positive random variable $T$, the survival function $S(t) = P(T > t)$ is related to the hazard by $\frac{\partial}{\partial t} \log(S(t)) = - h(t)$. We have
\begin{equation*}
S_T(t) = P(T > t) = P(T_1 > t, \ldots, T_d > t) 
= \Pi_{i=1}^d P(T_i > t)
\end{equation*}
by independence, and then
\begin{equation*}
S_T(t) = \prod_{i=1}^d \exp\left(\int_0^t h_i(s) \opd s\right) 
= \exp\left( \sum_{i=1}^d \int_0^t h_i(s) \opd s\right) 
= \exp\left( \int_0^t \sum_{i=1}^d h_i(s) \opd s \right),
\end{equation*}
implying that the sum of hazards is the hazard associated with $T$.
\end{proof}

\noindent In the situation of Lemma \ref{lem:competing} we will say that the $d$ hazards are competing, and this situation arises in the point process constructed earlier.


\begin{theorem}
Suppose that the process $N$ is constructed as above, where the exponential timer has distribution $\mathrm{Exponential}(\eta)$ and the distribution $F$ admits a hazard function $h(t) = f(t)/(1 - F(t))$. Suppose additionally that $\int_0^\infty h\, \opd t < 1$. Then $N$ has the conditional intensity
$$
\lambda(t) = \eta + \int_{0}^t h(t-s) \, N(\opd s).
$$
\label{thm:hazards}
\end{theorem}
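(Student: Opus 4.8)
The plan is to read the conditional intensity off each inter-event interval directly from the construction, using the characterisation \eqref{eq:cond_int_as_hazard}: on $(T_i,T_{i+1}]$ the conditional intensity equals the hazard rate of the next event time $T_{i+1}$ given the past. So fix $i\ge 0$ and condition on $\mathcal{H}_{T_i}$, i.e.\ on the realised values $0<T_1<\cdots<T_i$ (for $i=0$ the interval is $(0,T_1]$ and there is no prior event). By construction $T_{i+1}=T_i+W_i$ with $W_i=\min\big(V^{(i)}_0,V^{(i)}_1,\ldots,V^{(i)}_i\big)$, the timers being mutually independent given $\mathcal{H}_{T_i}$. Hence it suffices to compute the hazard rate of this minimum at local time $\tau=t-T_i$ and then re-express it in absolute time $t$.

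By Lemma \ref{lem:competing} — which extends verbatim to possibly defective waiting times, since its proof only uses $S_T=\prod_j S_j$ and $S_j(t)=\exp\big(-\int_0^t h_j\big)$, neither needing properness — the hazard of $W_i$ is the sum of the hazards of the competing timers. The memoryless timer $V^{(i)}_0\sim\mathrm{Exponential}(\eta)$ contributes the constant $\eta$. For $k\in\{1,\ldots,i\}$, put $a_k=T_i-T_k\ge 0$; by \eqref{eq:sampling_times} the timer $V^{(i)}_k$ has survival function $\bP{V^{(i)}_k>\tau}=\frac{1-F(a_k+\tau)}{1-F(a_k)}$, so its hazard rate at local time $\tau$ is $-\tfrac{\partial}{\partial\tau}\log\big(1-F(a_k+\tau)\big)=h(a_k+\tau)$, the constant denominator dropping out under the logarithmic derivative (recall $\tfrac{\partial}{\partial t}\log(1-F(t))=-h(t)$). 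Summing and substituting $a_k+\tau=t-T_k$ gives, for $t\in(T_i,T_{i+1}]$,
$$
\lambda(t)=\eta+\sum_{k=1}^{i}h(t-T_k)=\eta+\sum_{T_k<t}h(t-T_k)=\eta+\int_{-\infty}^t h(t-s)\,N(\opd s),
$$
since $N$ places no mass on $(-\infty,0]$; the case $i=0$ is the same computation with only the exponential timer present. Ranging over $i$ covers every $t$ outside the Lebesgue-null set of event times, which is all that \eqref{eq:cif_integral} requires.

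Two loose ends should be tied off. First, each $W_i$ is a.s.\ finite because $V^{(i)}_0$ is (the timers $V^{(i)}_k$, $k\ge 1$, may be improper, which under $\int_0^\infty h<1$ they indeed are, as $\int_0^\infty h=-\log\big(1-F(\infty)\big)$, but this is harmless), and each $W_i$ has a density, so the constructed $N$ is simple and has well-defined conditional event densities $f_i(\cdot\mid\mathcal{H}_t)$ as needed for \eqref{eq:cond_int_as_hazard}. Second, to speak of $\lambda$ on all of $\mathbb{R}_+$ one needs $T_i\to\infty$, i.e.\ non-explosion of $N$, which holds a.s.\ precisely because $\int_0^\infty h<1$; this is the usual Hawkes stability condition and is where that hypothesis enters. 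I expect the only genuinely delicate point to be the bookkeeping in the second paragraph: that the effective age of the timer attached to $T_k$, as seen at absolute time $t$, is $t-T_k$ regardless of how many events intervene or how often \eqref{eq:sampling_times} nominally restarts it — the conditioning in \eqref{eq:sampling_times} is engineered exactly so that the residual-lifetime hazard is always evaluated at the true elapsed time, which is what collapses the per-event contribution to $h(t-T_k)$.
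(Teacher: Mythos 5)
Your proposal is correct and follows essentially the same route as the paper's own proof: condition on the history at $T_i$, apply Lemma \ref{lem:competing} to the competing timers, identify the residual-lifetime hazard of the timer attached to $T_k$ as $h(t-T_k)$ in absolute time, and sum. Your additional care about defective timers, almost-sure finiteness of $W_i$, and non-explosion is a welcome tightening of points the paper leaves implicit, but it does not change the argument.
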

\begin{proof}
Immediately after $T_{i-1}$, the time $W_{i-1}$ until the next event is given by the minimum of $V^{(i-1)}_0$ (background event) and the minimum of the $V^{(i-1)}_j \overset{d}{=} V - \left(T_{i-1} - T_j\right) \mid V \geq \left(T_{i-1} - T_j\right)$. Using Lemma \ref{lem:competing}, we know that the hazard function associated with $W_{i-1}$ is the sum of the hazards for each of these cases. Now the hazard function for $V^{(i-1)}_j$, $j = 1, \ldots, i$ at the time $u \geq 0$ after $T_{i-1}$ is given by
\begin{align*}
h_j(u) :&= \frac{\bP{u \leq V^{(i-1)}_j \leq u + \opd u \mid V^{(i-1)}_j \geq u}}{\opd u} \\
&= \frac{\bP{u \leq V - ( T_{i-1} - T_j ) \leq u + \opd u \mid V - ( T_{i-1} - T_j ) \geq u}}{\opd u} \\
&= \frac{\bP{u + T_{i-1} - T_j \leq V \leq u + T_{i-1} - T_j + \opd u \mid V \geq u + T_{i-1} - T_j}}{\opd u} \\
&= h_V(u + T_{i-1} - T_j),
\end{align*}
or in the ``absolute'' time variable $t = u + T_{i-1}$, this is $h_v(t - T_j)$. In the preceding development, the condition $V \geq (T_{i-1} - T_j)$ ensures that $V^{(i-1)}_j$ is a waiting time, which is necessary for the definition of the hazard function. 

For $j = 0$, the waiting time is exponentially distributed, so that $h_0(t) \equiv \eta$.
Now the total hazard is 
$$
h(t) = \sum_{j=0}^i h_j(t) = \eta + \sum_{j=1}^i h(t - T_j) = \eta + \int_{-\infty}^t h(t-s) \, N(\opd s),
$$
as required.
\end{proof}

We now summarise the above construction as an algorithm:

\begin{center}
\begin{algorithm}[H]
 \KwData{distribution $V$ with hazard $h$ and CDF $F$, end time $T > 0$}
 \KwResult{a sample $(T_i, P_i)$ from the Hawkes process}
 $i := 1$, $t_1 \sim \mathrm{Exp}(\eta)$\;
 \While{$t_i < T$}{
  sample $V^{(i)}_0 \sim \mathrm{Exp}(\eta)$\;
  \For{$j = 1, \ldots, i$}{
  sample $V \sim F$ and set $V^{(i)}_j = V - \left(T_i - T_j\right) \mid V \geq \left(T_i - T_j\right)$\;
  }
  set $W_i = \min_j V^{(i)}_j$ and $P_i = \arg \min V^{(i)}_j$\;
  set $T_{i + 1} = T_i + W_i$.
 }
 \vspace{1em}
 \caption{Simulation algorithm for the univariate Hawkes process.}
 \label{alg:sim}
\end{algorithm}
\end{center}

We note that the conditional simulation step in Algorithm \ref{alg:sim} can be implemented easily if the variates $W$ are generated using the inversion method. In this case, we use the equality in distribution $X = F_X^{-1}(U)$ where $U \sim \mathrm{Unif}(0, 1)$ to simulate $X$. A variant on this, adapted for left-truncated random variables relies on the fact that
$$
X \mid X > a \overset{d}{=} F_X^{-1}(U) \text{ where } U \sim \mathrm{Unif}(F_X(a), 1).
$$

\section{Conclusion}
\label{sec:conclusion}

In this paper, we sought to offer a new perspective on the Hawkes process by interpreting the excitation function $h$ as a hazard function. This perspective sheds more light on existing parametric models, suggests a straightforward simulation algorithm, and also suggests avenues for further research.


We have only mentioned a few parametric distributions to define new excitation functions in Section \ref{sec:cure_rate}. This comes down to finding families of defective distributions with positive support. Obviously, in view of the remarks in Section \ref{sec:tech_def}, further examples abound by truncating parametric distributions, but we believe there remains interest in studying defective distributions supported on the whole positive half-life. To this end, we mentioned the Marshall-Olkin and Kumaraswamy families, which are a promising source of defective distributions, though the corresponding hazard functions do not always have simple forms; we expect that other families may be developed which maintain simple forms of $h(t)$ but add flexibility too.

A marked Hawkes point process, in which the hazard function depends both on the time delay and the past mark, can be seen in much the same way as an unmarked one. The model of most interest is of course the Cox proportional hazards model \cite{cox_regression_1972}, where a baseline hazard function such as those presented here is multiplied by a covariate-dependent factor. A possible difficulty may lie in the stability criterion of the marked Hawkes process, which depends on the distribution of the marks, and hence, on the values of the covariates.

While we focused interpretation and simulation, we imagine that the correspondence between excitation and hazard functions will also lend itself to estimation of the Hawkes process, whether as a graphical tool to select parametric forms or through existing methods in the estimation of survival models. Our initial efforts in this direction were hampered by the complicated data generating process of the individual wait times $V^{(i)}_j$ (see Section \ref{sec:simulation}), but perhaps existing work on the mixture of survival distributions, whether frequentist or Bayesian, could help elucidate the estimation of $h$.

Finally, our simulation algorithm can be easily extended to the case where the background rate $\eta$ varies with time, by simulating its waiting-time distribution in much the same way. It also extends to the recent and very interesting Renewal-Hawkes model of \cite{wheatley_hawkes_2016}, and one could see their paper as adding a non-exponential hazard function for the background process, associated with the renewal distribution $F$, to the additive decomposition of hazards. Marks could also be simulated either independently or conditional on the past. One could also imagine an extension of our simulation algorithm to the multivariate setting, though this seems more difficult because of the high number of variates to be simulated.

As continuous-time event data continues to become more abundant and profitable, it is clear that more efficient and mature computational tools must be developed for modelling and simulation. In the Python ecosystem, a significant advance has been made by Emmanuel Bacry and collaborators with the Tick project (\texttt{https://x-datainitiative.github.io/tick/}). A dedicated \texttt{R} package with well-tested and optimised methods would also be welcome.

\bibliography{Hawkes}

\end{document}